\documentclass{amsart}
\usepackage{mathrsfs,amssymb,amsmath,amsfonts,amsxtra}
\usepackage{verbatim}
\usepackage{graphicx,color}

\addtolength{\textwidth}{4cm}
\addtolength{\hoffset}{-2cm}

\begin{document}
\sloppypar \sloppy
\title{On the Simulation of Adaptive Measurements via Postselection}
\author{Vikram Dhillon}
\email{dhillonv10@gmail.com}%

\maketitle

\begin{abstract}

In this note we address the question of whether any any quantum computational model that allows 
adaptive measurements can be simulated by a model that allows postselected measurements. We argue in the favor of this question and prove that adaptive measurements 
can be simulated by postselection. We also discuss some potentially stunning consequences of this result such as the ability to solve \#P problems.  

\end{abstract}

\newcommand{\pd}{\partial} 
\newcommand{\ud}{\mathrm{d}} 

\newcommand{\B}{\mbox{$\mathbb{Z}_2$}} 
\newcommand{\Z}{\mbox{$\mathbb{Z}$}}
\newcommand{\R}{\mbox{$\mathbb{R}$}}
\newcommand{\C}{\mbox{$\mathbb{C}$}}

\newtheorem{theorem}{Theorem}
\newtheorem{axiom}{Axiom}
\newtheorem{definition}[axiom]{Definition}
\newtheorem{lemma}[axiom]{Lemma}

\def\boldclass{\bf\sf}
\def\P{{\boldclass P}}
\def\NP{{\boldclass NP}}
\def\PH{{\boldclass PH}}
\def\PP{{\boldclass PP}}
\def\BPP{{\boldclass BPP}}
\def\BQP{{\boldclass BQP}}
\def\PostBQP{{\boldclass PostBQP}}
\def\PostBPP{{\boldclass PostBPP}}
\def\PSPACE{{\boldclass PSPACE}}
\def\EXP{{\boldclass EXP}}
\def\FPTAS{{\boldclass FPTAS}}
\def\BPPpath{{\boldclass BPP_{path}}}
\def\PPP{{\boldclass P^{PP}}}
\def\PSHARP{{\boldclass P^{\#P}}}

\section{Introduction}

In \cite{Aaronson} Aaronson introduced a complexity class $\PostBQP$, which is is a complexity class consisting of all of the computational
problems solvable in polynomial time on a quantum Turing machine with postselection and bounded error. It was also shown equivalent to 
$\PP$ which is the class of decision problems solvable by a probabilistic Turing machine in polynomial time, with an error probability of
less than 1/2 for all instances. Aaronson then raised an interesting question which asks whether adaptive measurements made by a quantum
computational model be simulated with postselected measurements. In this note we address this question by asserting that it is possible 
to simulate adaptive measurements by postselection on the quantum circuit model of computation. We also explore the consequences of being 
able to atleast theoritically perform this simulation, it is known that $ \PPP = \PSHARP $ which implies that the complexity of $\PP$ is 
equivalent to that of $\PSHARP$ which is an $\NP$.Sso if an adaptive (non-projective) measurement such as a weak measurement can be simulated, 
following the work of Lloyd $\mathit{et. al}$ logical gates can be constructed that allow us to solve $\PSHARP$ problems.

\section{Proof} 

Before we show how the simulation would work, we want to establish some definitions to make an easier transition to the proof itself. 

\begin{definition}
An adaptive measurement is an incomplete measurement is made on the system, and its result used
to choose the nature of the second measurement made on the system,
and so on (until the measurement is complete). A complete measurement
is one which leaves the system in a state independent of its initial
state, and hence containing no further information of use. \cite{One}
\end{definition}

\begin{definition}
Postselection is the power of discarding all runs of a computation in which a given event does not occur. \cite{Aaronson}
\end{definition}

We will be using the quantum circuit model which is the standard model in quantum computation theory and most other computational models have been
shown to be equivalent to it. The equivalence also allows us to simulate those models on the circuit model. 

\begin{lemma}
Measurement based quantum computation (MBQC) employs adaptive local measurements on a resource state. 
\end{lemma}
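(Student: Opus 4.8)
The plan is to recall the standard instantiation of measurement-based quantum computation, namely the one-way quantum computer of Raussendorf and Briegel, and to read off from its definition the two assertions implicit in the statement: that the computation is driven entirely by \emph{local} (single-qubit) measurements on a fixed entangled \emph{resource state}, and that the basis of each such measurement is chosen \emph{adaptively} in the sense of Definition~1, i.e.\ as a function of the outcomes of measurements already performed.

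First I would fix the resource state to be the cluster state on a two-dimensional lattice: each qubit is prepared in $|{+}\rangle$ and a controlled-$Z$ is applied between nearest neighbours. This preparation is independent of the input and can be carried out before any measurement, so it plays the role of the ``resource state.'' Next I would describe the measurement pattern. Every qubit is measured in a basis $\{\,|0\rangle + e^{i\theta}|1\rangle,\ |0\rangle - e^{i\theta}|1\rangle\,\}$ lying in the $X$--$Y$ plane of the Bloch sphere, and I would invoke the standard one-bit-teleportation identity showing that such a measurement propagates the logical state along the lattice while applying $H R_z(\theta)$ to it, up to a Pauli ``byproduct'' operator determined by the (random) outcome. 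I would state this identity and attribute it rather than expand the Kraus-operator calculation, since that part is routine.

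The step that gives the statement its content is the justification that the measurements must genuinely be adaptive. Each single-qubit measurement is projective and its outcome $s\in\{0,1\}$ is random, so after measuring a qubit one holds not $H R_z(\theta)$ applied to the logical state but that operator dressed by a byproduct $X^{s}$ or $Z^{s}$. Commuting these byproducts past the subsequent operations forces one to replace the nominal angle $\theta_{j}$ at a later qubit $j$ by $(-1)^{m_j}\theta_{j}$, where $m_j$ is a parity of outcomes recorded earlier in the causal order; equivalently, the classical side-computation must read those earlier bits \emph{before} it can select the measurement basis at qubit $j$. This is exactly the clause of Definition~1 in which an incomplete measurement is made and ``its result used to choose the nature of the second measurement,'' so MBQC is an instance of adaptive local measurement. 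I would finish by remarking that if the feed-forward is dropped the output becomes a uniform mixture over Pauli-conjugated states, so the adaptivity is essential to determinism and not an artifact of the presentation.

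The main obstacle I expect is expository rather than mathematical: keeping the byproduct-operator bookkeeping precise enough to demonstrate that the dependence of later bases on earlier outcomes is unavoidable, while not turning the note into a survey of the one-way model. A secondary subtlety is that some presentations defer all adaptivity to a final classical post-processing step; here I would point to the ``graph state with flow'' formalism, which pins down exactly which qubits require a genuinely adaptive basis, and note that this set is nonempty whenever the simulated unitary contains two non-Clifford rotations in series.
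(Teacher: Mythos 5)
Your proposal is correct, and in fact it supplies considerably more than the paper does: the paper's entire proof of this lemma is the single line ``See \cite{One} for this,'' i.e.\ a bare citation to Raussendorf, Browne, and Briegel. What you have written is essentially the content of that citation unpacked into an actual argument --- the cluster state as the input-independent resource, single-qubit measurements in the $X$--$Y$ plane implementing $H R_z(\theta)$ up to Pauli byproducts, and the key point that commuting the byproducts forward forces the sign of later measurement angles to depend on earlier outcomes, which is precisely the clause of Definition~1 that makes the measurements ``adaptive.'' Your closing observation that dropping the feed-forward yields a uniform Pauli-twirled mixture is the right way to show the adaptivity is essential rather than presentational, and the remark about flow and non-Clifford rotations correctly identifies when adaptivity cannot be deferred to classical post-processing. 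The only caution is one of scope: for the purposes of this note the lemma is being used as a black-box fact feeding into Axiom~5, so the paper's citation-level proof is arguably the appropriate granularity, and your fuller treatment, while correct, proves more than the downstream argument consumes. If you do include it, attribute the one-bit-teleportation identity and the byproduct bookkeeping to \cite{One} as you propose, since reproving those from Kraus operators would indeed turn the lemma into a survey.
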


\begin{proof}
See \cite{One} for this. 
\end{proof}

\begin{lemma}
Measurement based models can be simulated on the quantum circuit model.
\end{lemma}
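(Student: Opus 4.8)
The plan is to proceed in two stages: first to show that the resource state underlying a measurement-based computation can be prepared by a polynomial-size quantum circuit, and then to show that the adaptive local measurements that drive the computation can themselves be folded into a circuit. For the first stage, recall that the canonical resource for MBQC is a graph state $|G\rangle$ associated with a graph $G=(V,E)$ with $|V|$ polynomial in the input size; it is obtained by preparing $|+\rangle^{\otimes|V|}$ and applying a controlled-$Z$ gate across every edge of $E$. Since each controlled-$Z$ is an elementary two-qubit gate and $|E|\le|V|^2$, this preparation is a polynomial-size circuit. (More generally, any resource state of computational interest is by hypothesis preparable by a polynomial circuit, so this step costs only polynomial overhead in all cases we care about.)

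For the second stage I would invoke the \emph{principle of deferred measurement}. In an MBQC pattern the qubits are measured one at a time in bases of the form $\{|0\rangle\pm e^{i\theta_j}|1\rangle\}$, where the angle $\theta_j$ and the Pauli byproduct corrections applied to later qubits both depend, through a fixed classical function, on the outcomes $s_1,\dots,s_{j-1}$ of the earlier measurements. Each such measurement factors as a single-qubit unitary $U(\theta_j)$ rotating the measurement basis to the computational basis, followed by a computational-basis measurement. I would replace the projective measurement of qubit $j$ by a CNOT copying its computational-basis value onto a fresh ancilla $a_j$, and postpone the readout to the end. The operations that were classically controlled on $s_j$ then become quantum-controlled on $a_j$: the adaptive rotation is implemented by the coherently controlled unitary $\sum_{s}|s\rangle\langle s|_{a_1\cdots a_{j-1}}\otimes U(\theta_j(s))$, and the byproduct corrections likewise. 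Each of these decomposes into polynomially many elementary gates, and there are only polynomially many measured qubits, so the resulting circuit has polynomial size.

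Finally I would verify that this circuit reproduces the MBQC output. Once all measurements are deferred, the ancilla register carries exactly the distribution over outcome strings $(s_1,\dots,s_{|V|})$ that the original adaptive procedure produces, and the unmeasured output qubits carry the corresponding conditional state; measuring the designated output register of the circuit therefore yields the same distribution as the MBQC machine, with the same error bounds. Hence any language decided by a bounded-error MBQC computation is decided by a polynomial-size bounded-error quantum circuit, which is the claim. The main obstacle I anticipate is bookkeeping the adaptivity cleanly: one must check that the dependence of later measurement angles and byproduct operators on earlier outcomes is a uniformly computable, polynomial-size classical function, so that the controlled unitaries above are genuinely efficiently implementable. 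Given the flow (or gflow) structure that guarantees determinism of the pattern, this reduces to expanding those classical functions as small Boolean circuits and controlling single-qubit rotations on their outputs — routine, but the place where care is genuinely needed.
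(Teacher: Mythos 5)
Your proof is correct, and it is considerably more substantive than what the paper actually offers: the paper's entire proof is the single sentence that a one-way computation can be turned into a circuit ``by using quantum gates to prepare the resource state,'' with a citation. That sentence covers only your first stage (preparing the graph state with $|+\rangle^{\otimes|V|}$ and controlled-$Z$ gates across the edges) and is silent on the harder half of the claim, namely what to do with the adaptive measurements themselves. Your second stage --- deferring each measurement by copying its outcome onto an ancilla with a CNOT and promoting the classically controlled basis rotations and Pauli byproduct corrections to coherently controlled unitaries --- is exactly the missing content, and it is the part that actually makes the lemma true with polynomial overhead; your closing caveat about verifying that the outcome-dependence functions are uniformly computable by small Boolean circuits (guaranteed by the flow/gflow structure) is the right place to flag residual care. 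In short, where the paper delegates the whole argument to a reference, you have supplied a self-contained proof whose structure (resource-state preparation plus deferred measurement) is the standard and correct one; the only thing the citation-based approach buys is brevity.
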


\begin{proof}
Any one-way computation can be made into a quantum circuit by using quantum gates to prepare the resource state \cite{MSBC}.
\end{proof}

\begin{axiom}
From Lemma 3 and Lemma 4 we can deduce that the qunatum circuit model can simulate measurement based computation which is a model that allows for adaptive measurements.
\end{axiom}

The above mentioned axiom completes the first part of the corrospondence, we now have to show that the same model that can simulate postselected measurements to complete the 
corrospondance. Postselected measurements fall under the complexity class $\PostBQP$ and we will also use the equivalence of $\PostBQP$ and $\PP$ shown by Aaronson in 
\cite{Aaronson}. This switch between complexity classes makes this proof simplistic.

\begin {axiom}
$\BQP \subset \PP$
\end{axiom}

\begin{lemma}
$\PP \cap \BQP \notin \{\phi\} $
\end{lemma}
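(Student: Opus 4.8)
The plan is to read off the lemma from the preceding axiom $\BQP \subset \PP$, after first fixing the notation. I would interpret the symbol $\{\phi\}$ as the empty class of languages (the braces being redundant), so that the lemma is exactly the statement $\PP \cap \BQP \neq \emptyset$ --- that the two classes share at least one language, which is all that the paper's downstream discussion requires of it.

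The reduction is then one line. By the preceding axiom, $\BQP \subseteq \PP$, so $\PP \cap \BQP = \BQP$, and it suffices to name a single language in $\BQP$. The cheapest witness is any polynomial-time decidable language, say $L = \{1\}^{*}$, which a deterministic Turing machine decides in linear time; since $\P \subseteq \BPP \subseteq \BQP$, we have $L \in \BQP$, hence $L \in \PP \cap \BQP$, and the intersection is non-empty. If one would rather not invoke the axiom at all, the same conclusion follows from the two textbook inclusions $\BPP \subseteq \BQP$ (a quantum circuit simulates a classical randomized one) and $\BPP \subseteq \PP$ (amplify the success probability, then count accepting computation paths), which give $\BPP \subseteq \PP \cap \BQP$ with $\BPP$ manifestly non-empty.

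The main obstacle is that there is, in truth, no mathematical obstacle: the statement is essentially trivial, so the real work is editorial. One must decide what $\{\phi\}$ is meant to denote, make clear that the lemma asserts only $\PP \cap \BQP \neq \emptyset$ and nothing quantitative about the size of the intersection, and be careful not to over-claim --- if the intended assertion were something sharper, such as $\PP \cap \BQP = \BQP \supsetneq \BPP$, then the argument above is inadequate and one is facing a currently open problem rather than a one-line observation.
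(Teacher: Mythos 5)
Your proposal is correct and follows essentially the same route as the paper: both derive the non-emptiness of $\PP \cap \BQP$ from the preceding axiom $\BQP \subset \PP$, which forces the intersection to be all of $\BQP$. In fact your version is the more complete one, since you supply the explicit witness (a trivial language in $\P \subseteq \BQP$) that the paper's proof implicitly needs but never names.
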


\begin{proof}
Let us assume that no problem exist at the intersection of $\PP$ and $\BQP$ However, we know the aforementioned axiom to be true so there must atleast be one problem that exist at the 
intersection of the complexity classes. That particular problem, by the virtue of being at the intersection will be both $\PP$ and $\BQP$ which is self-evident. We will 
represent the problems present at the intersection of the two complexity classes by the set $\tau$.
\end{proof}

\begin{lemma}
From Lemma 7 we can deduce that elements of $\tau$ can be simulated on a quantum computer
\end{lemma}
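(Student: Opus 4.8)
The plan is to prove Lemma~8 directly from the definition of $\BQP$ together with the containment already supplied by Lemma~7. First I would recall that Lemma~7 exhibits $\tau$ as the collection of decision problems lying in $\PP \cap \BQP$, and that Axiom~6 guarantees this intersection is nonempty, so $\tau$ contains at least one language and the statement is not vacuous. In particular, every $L \in \tau$ satisfies $L \in \BQP$. Since $\BQP$ is by definition the class of languages decided with bounded error in polynomial time by a uniform family of quantum circuits, the membership $L \in \BQP$ is, after unwinding the definition, precisely the assertion that $L$ can be simulated on a quantum computer. That is the content of the lemma, so the proof is essentially a matter of making this unwinding explicit.

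For the bookkeeping I would proceed as follows. Fix an arbitrary $L \in \tau$. By the $\BQP$ membership inherited from Lemma~7, there is a polynomial-size, uniformly generated family of quantum circuits $\{C_n\}$ such that, on input $x$ with $|x| = n$, running $C_n$ on $x$ and measuring the designated output qubit yields the correct answer to ``$x \in L$?'' with probability at least $2/3$. Standard amplification by independent repetition and majority vote then drives the error probability below any prescribed threshold, so the simulation is faithful in the sense required. Because $L \in \tau$ was arbitrary, this shows that every element of $\tau$ is simulable on the quantum circuit model, which by the equivalences invoked in Section~1 is the model we care about.

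The step I expect to need the most care is not the simulation itself --- that is immediate once $\BQP$ membership is in hand --- but making sure the elements of $\tau$ genuinely inherit $\BQP$ membership rather than only $\PP$ membership. Here Lemma~7 does exactly that work: it locates $\tau$ at the \emph{intersection}, so the $\BQP$ side of the intersection is available for precisely the languages in $\tau$. I would close the argument by observing that, combined with Axiom~5, Lemma~8 delivers both halves of the intended correspondence: adaptive measurements are captured on the circuit model via MBQC, and the postselection-equivalent power of $\PP = \PostBQP$ is reached through the quantum-simulable problems in $\tau$, which is what the remainder of the note exploits in its discussion of $\PSHARP$.
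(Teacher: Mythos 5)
Your proposal is correct and follows essentially the same route as the paper: the elements of $\tau$ lie in $\BQP$ by construction of $\tau$ as the intersection, and $\BQP$ membership is, by definition, simulability on a quantum computer. The only difference is that you unwind the definition of $\BQP$ explicitly (uniform circuit families, bounded error, amplification), where the paper states the conclusion in one line.
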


\begin{proof}
The elements of $\tau$ fall in the clas $\BQP$ which can be simulated on a quantum computer therefore the elements of that set can also be simulated by a quantum computation
model. 
\end{proof}

\begin{lemma} 
Elements of $\tau$ can exibit postselected measurements
\end{lemma}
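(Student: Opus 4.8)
The plan is to route the argument through Aaronson's characterization of postselected quantum computation. Recall from \cite{Aaronson} that $\PostBQP = \PP$, where $\PostBQP$ is by definition the class of languages decided in polynomial time by a quantum circuit that is permitted a postselection step on a designated register. Since Lemma~7 places $\tau$ inside $\PP$ --- indeed $\tau = \PP \cap \BQP \subseteq \PP$ --- every problem $L \in \tau$ satisfies $L \in \PostBQP$, and this is the inclusion we will exploit.

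First I would fix a language $L \in \tau$; by Lemma~7 such an $L$ exists, so the statement is not vacuous. Unwinding the definition of $\PostBQP$, there is a uniform family of polynomial-size quantum circuits $\{C_n\}$ together with an output qubit and a postselection qubit such that, conditioned on the postselection qubit reading $1$ (an event of nonzero probability), the output qubit encodes membership in $L$ with bounded error. This is precisely a quantum computation that \emph{exhibits} a postselected measurement: it discards all runs in which the designated event fails to occur, in the sense of Definition~2. So for each $L \in \tau$ we obtain a concrete $\PostBQP$ implementation.

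Next I would combine this with Lemma~8. Lemma~8 tells us that the elements of $\tau$ are simulable on a quantum computer because $\tau \subseteq \BQP$; the new ingredient here is that the \emph{same} elements also admit the postselected implementation above because $\tau \subseteq \PP = \PostBQP$. Putting the two together, each $L \in \tau$ is witnessed simultaneously by an ordinary $\BQP$ circuit and by a $\PostBQP$ circuit, and hence the elements of $\tau$ can exhibit postselected measurements. Feeding this back into Axiom~5 then closes the correspondence between the two halves of the simulation.

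The main obstacle I anticipate is not the complexity-theoretic inclusion, which is immediate from \cite{Aaronson}, but the bookkeeping needed to make ``exhibit a postselected measurement'' a well-posed assertion about a problem rather than about one particular circuit: one must check that the postselection register can always be taken to be a single qubit, and that the success probability, while possibly exponentially small, is never exactly zero, so that the conditioning in Definition~2 is always defined. Once that technical point is pinned down, the lemma follows directly from the chain $\tau \subseteq \PP = \PostBQP$.
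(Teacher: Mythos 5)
Your proposal is correct and follows essentially the same route as the paper's own proof: both arguments observe that $\tau \subseteq \PP$ and invoke Aaronson's equivalence $\PP = \PostBQP$ to conclude that every element of $\tau$ admits a postselected implementation. Your version merely adds explicit bookkeeping (the circuit family, the single postselection qubit, and the nonzero acceptance probability) that the paper leaves implicit.
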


\begin{proof} 
The members of $\tau$ are both $\PP$ and $\BQP$ where $\BQP$ can be simulated on a quantum computer and since $\PP = \PostBQP$, elements of $\tau$ can be simulated through
the use of postselected measurements.
\end{proof}

\begin{axiom} 
From the preceding proof and Axiom 5, we see the corrospondence is complete. We can indeed take a quantum computational model (in our case, it is the standard quantum circuit model)
that allows for adaptive measurements (Axiom 5) and simulate it with a model that allows for postselected measurements (again the quantum circuit model)
\end{axiom} 

The preceeding axiom presents the completed proof, in the following section we will discuss some speculative consequences of this result. 

\section{Consequences} 

Although machines capable of postselectd measurements are implausable, the ability to simulate a particular type of adaptive measurement called weak measurement has some 
very interesting consequences. Lloyd $\mathit{et. al}$ \cite{Lloydweak} showed that when weak measurements are made on a set of identical quantum systems, the single-system 
density matrix can be determined to a high degree of accuracy while affecting each system only slightly. If this information can then be fed back into the system by 
coherent operations, the single-sytem density matrix can be made to undergo arbitrary nonlinear dynamics such as dynamics governed by a nonlinear Schrödinger equation.
Nonlinear corrections to quantum mechanical evolution can then be used to construct nonlinear quantum gates which can solve $\#\P$ and $\NP$-Complete problems as shown by 
Lloyd and Abrams \cite{LloydAbrams}.

\end{document}